\documentclass[11pt]{article}
\usepackage{amsmath,amsfonts,amsthm,amssymb}
\usepackage{fontenc,indentfirst, delarray}
\usepackage{graphicx}
\usepackage{xcolor,fontenc,indentfirst,delarray,graphicx,graphics,epstopdf}
\usepackage{pdfpages,lastpage,fancyhdr,textcomp}
\DeclareGraphicsExtensions{eps,ps,p.gz,pdf}
\usepackage[pdfstartview=FitH]{hyperref}
\usepackage{etaremune}

\makeatletter
\renewcommand{\section}{\@startsection {section}{1}{\z@}%
             {-3.5ex \@plus -1ex \@minus -.2ex}%
             {2.3ex \@plus.2ex}
 {\normalfont\Large\bfseries}}
\renewcommand{\subsection}{\@startsection {subsection}{1}{\z@}%
             {-3.5ex \@plus -1ex \@minus -.2ex}%
             {2.3ex \@plus.2ex}
 {\normalfont\normalsize\bfseries}}
\renewcommand{\subsubsection}{\@startsection {subsubsection}{1}{\z@}%
             {-3.5ex \@plus -1ex \@minus -.2ex}%
             {2.3ex \@plus.2ex}
 {\normalfont\normalsize\bfseries}}
\makeatother

\definecolor{bleuvert}{rgb}{.1,.5,.4}
\definecolor{bleu}{rgb}{.2,.3,.5}
\definecolor{light-gray}{gray}{0.95}
\definecolor{gray}{gray}{0.75}
\definecolor{violet}{rgb}{0.4,0.,0.3}
\definecolor{jaune}{rgb}{0.8,0.6,0.1}
\definecolor{cvert}{rgb}{0.8,0.6,0.5}

\newcommand{\bleu}[1]{{\color{bleu} #1}} 
 
\newtheorem{thm}{Theorem}[section]

\newtheorem{prop}[thm]{Proposition}
\newtheorem{cor}[thm]{Corollary}
\newtheorem{defi}[thm]{Definition}
\newtheorem{exem}[thm]{Example}
\newtheorem{rem}[thm]{Remark}

\newcommand{\B}{\mathcal{B}}

\def\begf{\begin{frame}}
\def\enf{\end{frame}}

\def\begz{\begin{itemize}}

\def\endz{\end{itemize}}

\def\lp{\left(} 
\def\rp{\right)} 
\def\dm{\lp\begin{array}}	
\def\fm{\end{array}\rp}
\def\begf{\begin{frame}}
\def\enf{\end{frame}}

\def\begz{\begin{itemize}}

\def\endz{\end{itemize}}

\def\lp{\left(} 
\def\rp{\right)} 
\def\dm{\lp\begin{array}}	
\def\fm{\end{array}\rp}
\def\m2{M_2 \lp \cc \rp}
\def\m3{M_3 \lp \cc \rp}

\def\ds{\partial\!\!\!\slash}

\def\cc{{\mathbb{C}}}
\def\C{{\mathbb{C}}}	

\def\R{{\mathbb{R}}}

\def\I{{\mathbb{I}}}

\def\mm{{\mathcal M}}	
\def\M{{\mathcal M}}

\def\A{{\mathcal A}}

\def\hh{{\mathcal H}}

\def\K{{\mathcal K}}
\def\BH{{\mathcal B}({\mathcal H})}

\def\cinf{C^{\infty}\lp\mm\rp}

\def\xo0{\omega^0_x}
\def\yo0{\omega^0_y}

\def\xo0{x_\omega^0}
\def\yo0{y_\omega^0}

\def\fm{\Phi(x^\mu)}

\def\dm{\partial_\mu}

\def\dmm{\left(\begin{array}}
\def\fmm{\end{array}\right)}

\newcommand{\HH}{\mathcal{H}}

\usepackage[utf8]{inputenc}
\usepackage[TS1,T1]{fontenc}
\usepackage[top=2.5cm,bottom=2.5cm,left=2.5cm,right=2.5cm,heightrounded,bindingoffset=0mm]{geometry}

\definecolor{grispalebis}{rgb}{.95,.95,.95}

\begin{document}

\title{Twisted Standard Model and its Krein structure\\  {\emph{in memoriam Manuele Filaci}} }\author{P. Martinetti \\[6pt] Dpt. di Matematica, universit\`a di Genova \& INFN}

\maketitle

  \begin{abstract}
 We review the contributions of Manuele Filaci - a PhD student from the
 university of Genova  prematurely deceased a little more than a year
 ago -  to the description of the Standard Model in noncommutative
 geometry. Building on Manuele's discovery that there exist various
 ways to minimally twist the spectral triple of the Standard Model, we
 study in a systematic way the inner product induced by the
 twist. Under loose
 assumptions, this product turns the Hilbert space of the spectral
 triple into a Krein space. For the Standard
 Model, the group of unitary with respect to the twisted product contains the
 symmetry group of twistors as a subgroup.
  \end{abstract}

\section{Introduction}

Manuele Filaci was a  student from the university of Genoa, who
obtained a PhD in 2021.  He started soon after a postdoc at the Jagiellonian 
University in Krakow, but his bright future as a scientist was brutally stopped by a cruel disease. Manuele passed
away in november 2024, just a few months before the beginning of the conference on
\emph{Applications of noncommutative geometry to gauge theories, field
  theories ans quantum spacetimes} at CIRM
(Luminy, France) where
his contribution was awaited with great interest and curiosity.  This
paper is to honor his memory, his work, and
highlight how his contribution to science is rich in future developments.

Manuel started a PhD on neutrinos phenomenology \cite{Biggio:2020aa}, then decided to
switch to mathematical physics. This was a courageous move and we
decided to work together on the applications of
noncommutative geometry to high energy physics, particularly the
description of the Standard Model of fundamental interactions in terms
of spectral triple \cite{Connes:1996fu}. Several important questions
remain open regarding the neutrino sector of
the Standard Model: are neutrinos Dirac or Majorana spinors, what is their mass, do 
right-handed neutrinos exist ? 
In the spectral description of the Standard Model as well, neutrinos
do not behave as the other fermions. They are
transparent to \emph{fluctuation of the metric}, and therefore do not
contribute to the generation of the bosonic sector of the
model. 

Indeed, while the fermionic fields belong to the Hilbert space
$\HH$ that enters the definition of the spectral triple{\footnote{$\A$ is an algebra acting as bounded operators on a
    separable Hilbert space $\HH$
    through a faithful involutive representation $\pi$,  and
    $D$ is an operator on $\HH$ specified in \eqref{eq:specSM}
    below. One usually omits the symbol $\pi$ of representation. We
    will reintroduce it when needed. In all the paper, we assume $\A$
    is a unital algebra with unit $\bf 1$.}} $(\A, \HH, D)$ associated with the Standard Model \cite{Chamseddine:2007oz},
the bosonic fields are part of the (generalised) connection $1$-form 
\begin{equation} \label{eq:7} A + J A J^{-1} \end{equation} 
where $A$ is a selfadjoint element of the set of (generalised)
$1$-forms \begin{equation} \label{eq:8} \Omega^1_D(\A):= \left\{
    \sum_i a_i \, [D, b_i], a_i, b_i\in \A\right\}, \end{equation} and
$J$ is the \emph{real structure} \cite{Connes:1995kx}, namely an
antiunitary operator on $\HH$ such
that \begin{equation} \label{eq:33} J^2=\epsilon \I, \qquad JD =
  \epsilon' DJ,\qquad J\Gamma =\epsilon''\Gamma J \end{equation} where
$\epsilon, \epsilon',\epsilon''$ in $\left\{-1, 1\right\}$ defines the
\emph{KO-dimension}  and $\Gamma$ is the grading, defined below in \S
\ref{subsec:twist-by-grading}.\linebreak   The operator $D$ is parametrised by the physical inputs of the theory, namely the Yukawa couplings of fermions, the Cabibbo matrix and the PNMS mixing matrix for neutrinos. It turns out that the part of $D$ containing the Majorana mass of the neutrinos - and only this part - commutes with the algebra, hence does not contribute to the connection \eqref{eq:7}. 
The substitution of the operator $D$ with the \emph{covariant
  operator} \begin{equation} \label{eq:12} D_A:= D + A +
  JAJ^{-1} \end{equation} is called a \emph{fluctuation of the
  operator $D$} (or fluctuation of the metric, for $D$ defines the
metric through Connes spectral distance formula
\cite{Connes:1992bc}). That is why we summarise the previous
discussion stating that neutrinos are transparent to fluctuations of
the metric.
\bigskip

 This transparency was well identified from the beginning
\cite{Chamseddine:2007oz} and did not seem
particularly problematic, until the discovery of the Higgs boson in
2012.  Connes and Chamseddine then realised  that making the Majorana mass of the neutrino
contributing to the bosonic part of the model would allow to fit the
prediction of the Higgs mass in noncommutative geometry, and would
also  provide the kind of scalar  field required to solve the meta-stability problem of
the electroweak vacuum \cite{Chamseddine:2012fk}. 

A way to turn the
Majorana mass into a scalar field $\sigma$  was proposed  in
\cite{Chamseddine:2013fk}, following  ideas anticipated in \cite[\S
9.1, 9.2]{Connes:2010fk}). It consists  in relaxing the first-order condition (one of the conditions
that defines a spectral triple), fluctuating the operator $D$ then
retrieving the first-order condition dynamically, by a spontaneous breaking of a higher symmetry. This yields a Pati-Salam model of
grand unification~\cite{Chamseddine:2013uq}.

Another proposal was to start with an algebra  bigger than
the one of the Standard Model, so that it  would no
longer commute with the Majorana part of the Dirac operator \cite{Devastato:2013fk}. This  led
to a description of the Standard Model by a \emph{twisted spectral
  triple} \cite{buckley} (see \cite{Filaci:2023aa} for an extensive
review).  Manuele's contribution has been 
decisive: elaborating on a previous model where only part of the
algebra of the Standard Model had been enlarged \cite{buckley}, he worked out
a twisted version of the full Standard Model \cite{Filaci:2021ab}. He
also discovered that contrary to our initial hope, the
first-order condition was  not preserved, not even in its twisted
form. We then started to classify  all the possible ways to twist
the Standard Model, hoping to individuate one of them that would both
preserve the twisted first-order condition and generate the
extra-scalar field \cite{Manuel-Filaci:2020aa}. We had no time to
reach a conclusion.
\bigskip

All this is reviewed in section \ref{sec:mintwist}, together with a
recent result on the
interpretation of the twisted fluctuation of the free Dirac operator as a
torsion term.   In section \ref{sec:sigchange}, inspired by Manuele's
discovery that there exist several  ways to twist the spectral triple
of the Standard Model, we study in full generality the 
inner product induced on $\HH$ by an arbitrary twist (\S
~\ref{subsec:twistinprod}-\ref{subsec:Krein}).  We show in
proposition \ref{prop:Kreinprodcut} that under
rather loose conditions, this product is a Krein product, in agreement
with previous known examples
\cite{Devastato:2020aa,Devastato:2018aa,Martinetti:2019aa} and 
recent studies \cite{Nieuviarts:2024aa}. In
\S\ref{subsec:twistop}  we apply these results to
minimal twists, and point out in \S\ref{subsec:unit} a relation between twisted
unitaries and the symmetry group of twistors.
\newpage 

\section{Minimal twist of the Standard Model}
\label{sec:mintwist}

\subsection{Spectral description of the Standard Model} \label{subsec:sm}

 The spectral triple of the Standard Model \cite{Chamseddine:2007oz}
 is the product 
\begin{align} 
\label{eq:specSM} 
\A= \cinf \otimes
                  \bleu{\A_{F}},\quad \hh= L^2(\M, S)\otimes \hh_{F},
                  \quad D= \ds\otimes \I_F + \gamma_\M\otimes
                  D_{F} \end{align} of the canonical spectral triple
                associated with a  Riemannian, compact, spin manifold
                $\M$ of even dimension $n=2m$, namely
\begin{equation} \label{eq:1} \cinf,\quad
                  L^2(\M, S)\quad \ds=
                  -i\gamma^\mu\nabla_\mu, \end{equation} 
with the finite dimensional spectral  triple 
\begin{align} \label{eq:tripletf} \A_{F}= \C
                         \oplus {\mathbb H} \oplus M_3(\C), \; \;
                         \hh_{F}= \C^{96},\quad D_{F}.
\end{align} 

In \eqref{eq:1}, the algebra of smooth functions on $\M$
acts by multiplication on the Hilbert space $L^2(\M, S)$ of square
integrable smooth sections of the spinor bundle over $\M$. The
operator $\ds$ is the usual Dirac operator of quantum field theory,
where the $\gamma^\mu$'s are the Euclidean Dirac matrices satisfying
the anticommutation relation 
\begin{equation} 
\label{eq:2} 
\left\{\gamma^\mu, \gamma^\nu\right\}= 2g^{\mu\nu}\I 
\end{equation} 
with $g^{\mu\nu}$ the components of the Riemannian metric and $\I$ the
identity operator on $L^2(\M, S)$, while \begin{equation} \label{eq:3}
  \nabla_\mu =\partial_\mu +\omega_\mu \end{equation} is the covariant
derivative on the spinor bundle, with $\omega_\mu$ the spin
connection, that is the lift to the spinor bundle of the Levi-Civita
connection on $\M$.{\footnote{We use Einstein summation on indices in alternate
position: up/down or down/up. In view of the Lorentzian signature, one
counts the dimension of $\M$ by a greek index $\mu$ running from $0$ to $n-1$.}}

The operator $\gamma_\M$ in \eqref{eq:specSM} is the  grading of the
Clifford bundle  on $\M$ (on a
$4$-dimensional manifold, this is the physicist's $\gamma^5$ matrix)
and $\I_F$  the identity endomorphism on $\HH_F$. 

The details of the finite dimensional spectral triple
\eqref{eq:tripletf} are not relevant for this work (they can be found
in \cite{Chamseddine:2007oz}, and are motivated in
\cite{Chamseddine:2008uq}). The operator $D_F$ is a $96\times 96$
matrix containing the parameters of the theory, namely the Yukawa
couplings of fermions, the Dirac and Majorana masses for the
neutrinos, the Cabibbo matrix and the mixing matrix for
neutrinos. What matters in the following  is that $D_F$ splits into a
block diagonal part $D_0$ and an off-diagonal part $D_M$ that commutes
with the representation of the algebra $\A_F$ on $\HH_F$, that
is \begin{equation} \label{eq:5} [\gamma_\M\otimes D_M, a]= 0
  \quad\quad\forall a\in \A=\cinf\otimes \A_{F}. \end{equation} Hence
$D_M$, whose only non zero entry is the Majorana mass $k_M$ of the
neutrino, does not contribute to the $1$-form \eqref{eq:12}.  As explained in the
introduction, we say it is
transparent to fluctuations of the metric. 

To make the neutrino part of the Standard Model contributing to the
generation of bosons, a possibility is to relax the \emph{first-order
  condition}\cite{Chamseddine:2013fk} \begin{equation} \label{eq:4}
  [[D, a], \, JbJ^{-1}]=0\qquad \forall a, b\in\A \end{equation} which
is one of the condition of Connes reconstruction theorem
\cite{connesreconstruct}, also satisfied by the spectral triple of the
Standard Model.  By doing so, one generates the extra-scalar field
$\sigma$ required to stabilize the electroweak vacuum and gets the
correct Higgs mass \cite{Chamseddine:2012fk}. This leads to  a Pati-Salam
model of grand unification \cite{Chamseddine:2013uq}.  

\subsection{Twist by grading} 
\label{subsec:twist-by-grading}

Alternatively, one may consider a bigger algebra $\A'$ acting on the
same Hilbert space $\HH$, but that does 
not commute with $\gamma_\M\otimes D_M$.  In that way, one makes the
Majorana mass $k_R$ of the neutrino contributing to the
fluctuation. It is important to keep the Hilbert space and Dirac
operator untouched, since they encode the fermionic content of the
model and  there is no experimental indication, nor theoretical requirement, of new fermions beyond the Standard Model.

There is a natural candidate for this bigger algebra $\A'$, as soon as
the spectral triple is graded. A grading for a spectral triple $(\A,
\HH, D)$ is a selfadjoint operator $\Gamma$ in $\BH$ which has the properties
of $\gamma_\M$, that is it
squares to $\I$ and is such that 
\begin{equation} 
\label{eq:23} 
\left\{\Gamma, D\right\}=0, \qquad [\Gamma, a]=0 \quad \forall a\in
\A. 
\end{equation} 
Decomposing the Hilbert space as $\HH=\HH_+ \oplus \HH_-$ where $\HH_\pm:= p_\pm \HH$ are
the image of $\HH$ under the projections 
\begin{equation}
  \label{eq:85}
  p_\pm := \frac 12 (\I\pm\Gamma)
\end{equation}
defined by the grading, the representation $\pi$ of $\A$
is block diagonal 
\begin{equation} \label{eq:24} \pi(a)= \begin{pmatrix} \pi_+(a) &0 \\
    0& \pi_-(a) \end{pmatrix} 
\end{equation} 
with $\pi_\pm$ the restriction of $\pi$ to $\HH_\pm$.  Thus $\HH$ is
large enough to carry two independent representations of $\A$, one on
each subspace $\HH_\pm$. In other terms, the Hilbert space of a graded
spectral triple naturally carries a representation $\pi'$ of the
algebra
\begin{equation}
\A'=\A\otimes \C^2,
\label{eq:79}
\end{equation}
namely 
\begin{equation} \label{eq:6} \pi'((a_1, a_2))=\begin{pmatrix} \pi_+(a_1) & 0 \\ 0 & \pi_-(a_2)\end{pmatrix} \qquad \forall a_1, a_2\in \A. \end{equation}

 However, a fundamental property of spectral triples no longer holds
 for $\A'$, namely the boundedness of the commutator $[D, \pi(a)]$ for any
 $a$ in $\A$, This is because $\Gamma$ anticommutes with $D$ by
 definition, so that $D$ is off-diagonal in the eigenbasis of
 $\Gamma$, 
\begin{equation} \label{eq:9} D= \begin{pmatrix} 0 & {\cal D} \\ {\cal
      D}^\dagger & 0\end{pmatrix} 
\end{equation} where $\cal D$ is the restriction of $D$ to $\HH_-$,
with value in $\HH_+$. Therefore 
\begin{equation} \label{eq:10} [D, \pi'(a_1, a_2)] = \begin{pmatrix}
    0& {\cal D} \pi_-(a_2) -\pi_+(a_1) \,{\cal D} \\  {\cal D}^\dagger \,\pi_+(a_1) - \pi_-(a_2)\,{\cal
      D}^\dagger&0 \end{pmatrix}. 
\end{equation} 
This commutator has no reason to be bounded. 

 \begin{exem} For the spectral triple (\ref{eq:specSM}) of the
   Standard Model, the unboundedness comes from the component
   $\ds\otimes\I_F$ of the operator $D$ (the other part
   $\gamma_\M\otimes D_F$ being a bounded operator, its twisted
   commutator with any $a$ is bounded). One has 
\begin{equation} [\ds\otimes\I_F, \pi'((a_1, a_2))]= -i[
  \gamma^\mu\partial_\mu\otimes \I_F , \pi((a_1, a_2))]
  -i[\gamma^\mu\omega_\mu\otimes\I_F , \pi((a_1, a_2))].  \label{eq:19} 
\end{equation} 
The second term is bounded, since the Dirac matrices, the spin
connection and the representation of the algebra are all bounded
operators. Instead, the first  term is not bounded as soon $a_1, a_2$
are non constant. It is enough to check it on 
$a_1=f\otimes {\bf 1}_F$,  $a_2=g\otimes {\bf 1}_F$ 
where $f, g$ are distinct, non constant functions in $\cinf$ and ${\bf 1}_F$ is
the identity of $\A_F$.  Working in dimension $4$ for
simplicity  \linebreak 
\newpage
\noindent(but
the proof extends to arbitrary even dimension) with the Dirac matrices
\begin{equation}
\gamma^\mu =
\begin{pmatrix}
  0 & \sigma^\mu \\ \tilde\sigma^\mu &0 
\end{pmatrix}
\quad\text{ for }\quad  \sigma^\mu =\left\{\I,
  -i\sigma_{j=1,2,3}\right\}, \quad \tilde\sigma^\mu =\left\{\I,
  i\sigma_{j=1,2,3}\right\}\qquad \mu=0,1,2,3
\label{eq:78}
\end{equation}
where the $\sigma_i$'s are the Pauli
matrices, one gets
\begin{align} 
\label{eq:21} 
[-i\gamma^\mu\partial_\mu\otimes\I_F, \pi'(a_1, a_2)] &= -i
                                                        \left(
\begin{pmatrix}
                                                            0 &
                                                            \sigma^\mu \partial_\mu\\
                                                            \tilde\sigma^\mu \partial_\mu&
                                                            0 
\end{pmatrix} 
\begin{pmatrix}
                                                            f & 0 \\0
                                                            &
                                                            g 
\end{pmatrix}
                                                              - 
\begin{pmatrix}
                                                                f& 0
                                                                \\0 &
                                                                g 
\end{pmatrix}
\begin{pmatrix}
                                                                0 &
                                                                \sigma^\mu \partial_\mu\\
                                                                \tilde\sigma^\mu \partial_\mu&
                                                                0 
\end{pmatrix}
                                                                                                \right)\otimes\I_F,\\
                                                      &=-i 
\begin{pmatrix} 0 & \sigma^\mu \partial_\mu g - f
  \sigma^\mu \partial_\mu\\  \tilde\sigma^\mu \partial_\mu f - g
  \tilde\sigma^\mu \partial_\mu&0 
\end{pmatrix}\otimes\I_F.  \end{align}
 By the Leibniz rule, the upper-right block is 
\begin{equation} \label{eq:25} \sigma^\mu (\partial_\mu g) + \sigma^\mu g\partial_\mu - f \sigma^\mu \partial_\mu = \sigma^\mu (\partial_\mu g) + \sigma^\mu (g-f)\partial_\mu . 
\end{equation} 
This is a differential operator, hence unbounded. The same is true for
the lower-left block.\end{exem}

This problem is similar as the one treated in
\cite{Connes:1938fk}, yet  in a different context (see remark
\ref{rem:conform}). The solution is to substitute the
boundedness of the commutator of $D$ with $\A$, with the requirement
that there exists an automorphism $\rho$ of $\A'$ such that the
\emph{twisted commutator} \begin{equation} \label{eq:11} [D,
  a']_\rho:=Da' - \rho(a')D \end{equation} is bounded for any
$a'\in\A'$. This defines a \emph{twisted spectral triple}
\begin{equation}
(\A', \HH,
D), \rho
\label{eq:70}
\end{equation}
(also called $\sigma$-spectral triple, but we dot use this
terminology to avoid confusion with the extra-scalar field $\sigma$
obtained by fluctuating of the Majorana mass of the neutrino). 

In our case, $\rho$ is the automorphism of $\A\otimes \C^2$
that flips the two elements of $\A$, 
\begin{equation} 
\label{eq:13} 
\rho((a_1, a_2)):= (a_2, a_1)\quad \forall (a_1, a_2)\in \A\otimes
\C^2. 
\end{equation} 
Indeed, (restoring the symbol of representation) the boundedness of  
\begin{equation*} 
[D, \pi'(a_1, a_2)]_\rho = D\,\pi'(a_1, a_2) -
  \pi'(a_2, a_1)\, D= \begin{pmatrix} 0& {\cal D} \pi_-(a_2) - \pi_+(a_2) {\cal D} \\
    {\cal D}^\dagger \pi_+(a_1) - \pi_-(a_1){\cal
      D}^\dagger&0 \end{pmatrix} \end{equation*} 
follows from the
boundedness of \begin{equation} [D, \pi(a_1)]= [D, \pi'(a_1,
  a_1)]=\begin{pmatrix} 0 &{\cal D} \pi_-(a_1) - \pi_+(a_1) {\cal D} \\  {\cal
      D}^\dagger \pi_+(a_1) - \pi_-(a_1) {\cal D}^\dagger&0\end{pmatrix} 
\end{equation}
 (and similarly for $a_2$) inherited  from the definition of the initial spectral triple $(\A, \HH, D)$.

If the representations $\pi_\pm$ in  \eqref{eq:24} are faithful, the
same is true for \eqref{eq:79} and we obtain a  twisted spectral
   triple \begin{equation} \label{eq:14} (\A\otimes\C^2, \, \HH, \, D),
     \,\rho\end{equation} called a \emph{minimal twist} of the graded
   spectral triple $(\A, \HH, D)$.  
 The adjective ``minimal'' reflects the fact that the Hilbert space and the operator $D$ are untouched, only the algebra is modified. 

\begin{rem} 
\label{rem:conform} In \cite{Connes:1938fk}, a twisted spectral triple is
  associated to the canonical spectral triple \eqref{eq:1} of a
  manifold by applying
  a conformal map on the Dirac operator (keeping the same algebra and
  Hilbert space). It satisfies the so called Lipschitz
  condition (omitting again the symbol of representation) 
\begin{equation} \label{eq:26} [|D|, a']_\rho \:\text{ is bounded } \; \forall a'\in \A'.  
\end{equation} The minimal
  twist \eqref{eq:14} does not satisfies this condition, making it an interesting
  candidate to apply the double index formula introduced in \cite[\S
  3]{Connes:1938fk}. This will be explored in a forecoming paper
  \cite{Martinetti:2025aa}. 
\end{rem} 

\newpage

 Any graded spectral triple $(\A, \HH, D)$ such that the
representations $\pi_\pm$ in \eqref{eq:24} are faithful can be minimally twisted
following the procedure above, that we call a \emph{twist by
  grading}.  The general properties of the resulting twisted spectral
triple  have been studied in \cite{Lett.}. In particular, it is graded
with the same grading as the initial triple. Furthermore, if the
initial spectral triple is real, then its twisted partner is real with
the same real structure and $KO$-dimension.  If the initial spectral triple satisfies the
first order condition, then its twisted partner automatically
satisfies a twisted version of it, 
\begin{equation} [[D, a']_\rho, \, Jb'J^{-1}]_{\rho_0}=0\qquad \forall
  a', b'\in\A\otimes\C^2\label{eq:18} \end{equation}
 where 
\begin{equation} \label{eq:22} \rho_0(JbJ^{-1}):= J \rho(b) J^{-1}.  
\end{equation} 
The fluctuations of the metric as well make sense in the twisted
context, and amounts to substituting the generalised $1$-form $A$ in
\eqref{eq:7} with a generalised twisted $1$-form $A_\rho$, that is an
element of 
\begin{equation} \label{eq:15} \Omega^1_{D}(\A,\rho):= \left\{ \sum_i
    a_i \, [D, b_i]_\rho, a_i, b_i\in
    \A\otimes\C^2\right\}.  \end{equation} 
This yields a \emph{twisted-covariant} Dirac operator 
\begin{equation} \label{eq:17} D_{A_{\rho}}= D +A_\rho + J A_\rho J^{-1}.  \end{equation}

\subsection{Minimally twisted Standard Model}
\label{subsec:minsm} 

By twisting the spectral triple \eqref{eq:specSM} of the Standard
Model, the hope was to be able to generate the extra-scalar field
$\sigma$ discussed at the end of \S \ref{subsec:sm} thanks to a
twisted-fluctuation of the metric~\eqref{eq:17}, while preserving the first order
condition in its twisted form \eqref{eq:18}.  However, and Manuele was
the first to notice it, the twist using the grading of the spectral triple  (\ref{eq:specSM}),
\begin{equation}
\Gamma=\gamma_\M\otimes \Gamma_F,
\label{eq:71}
\end{equation}
(with  $\Gamma_F$ a
grading  of the spectral triple (\ref{eq:tripletf})) does
not permit this, for $\gamma_\M\otimes D_M$ twist-commutes with
$\A\otimes \C^2$: \begin{equation} \label{eq:28} [\gamma_\M\otimes D_M , \pi'(a_1, a_2)]_\rho=0.  \end{equation}  This is because the diagonal part $D_0$ and
off-diagonal parts $D_M$ of $D_F$ anticommutes independently with
$\Gamma_F$ (see \cite[\S 4.2]{Filaci:2023aa}). In other
terms, $\gamma_\M\otimes D_M$ is also transparent under twisted fluctuations.

This argument holds for the twist by grading, but fortunately, the 
twist does not necessarily need to be by grading. For any \emph{almost commutative
  geometry}, that is any product \eqref{eq:specSM}  of the
canonical spectral triple \eqref{eq:1}  of $\M$ with a finite dimensional
spectral triple (non necessarily \eqref{eq:tripletf}, one may repeat the
construction of \S \ref{subsec:twist-by-grading} using instead of the
grading $\Gamma$ any operator 
\begin{equation} 
T= \gamma_\M\otimes T_F\label{eq:16} 
\end{equation} 
with $T_F$  a bounded operator on $\HH_F$  distinct from 
$\pm\I_F$ and  such that 
\begin{equation} T_F=T_F^\dagger,\qquad T_F^2=\I_F \quad \text{ and
  }\quad [T_F, m]=0 \;\,\forall  m\in \A_F.
\label{eq:27} 
\end{equation} 
These conditions guarantee that $\HH$ carries two independent
representations of $\A$, one on each of the eigenspaces of $T$.
This allows to build the minimal twist with twisting automorphism the
flip \cite{Manuel-Filaci:2020aa}.  

\newpage
\noindent But there is no need that 
\begin{equation} 
\left\{T_F, D_F\right\}=0, 
\label{eq:35} 
\end{equation} 
meaning that $\Gamma_F$ does not need to be a grading of the spectral triple \eqref{eq:tripletf}. The advantage of using $T_F$ instead of $\Gamma_F$ is that as soon as \eqref{eq:35} does not hold, then \eqref{eq:28} has no reason to hold neither, hence $\gamma_\M\otimes D_M$ is no longer transparent under twisted fluctuations.

A classification of all possible twisting operators $T_F$ for 
arbitrary almost commutative geometries 
 has been undertaken in \cite{Manuel-Filaci:2020aa} (see also \cite[\S 4.2]{Lett.}). It is not yet fully achieved, and we still do not know in how many non-equivalent ways the Standard Model can be minimally twisted. 
Nevertheless, we know that there exists at least one twisting operator
which permits to obtain the extra-scalar field $\sigma$, namely the
operator $T_F$ which takes value \begin{equation} +1\; \text{on left
    particles and antiparticles},\quad -1 \;\text{on right particles
    and antiparticles.}\label{eq:29} \end{equation} 
 Unfortunately, this minimal twist  violates the twisted first-order
 condition \eqref{eq:18}. However, this does not prevent to fluctuate
 the metric, following the procedure worked out in
 \cite{Martinetti:2021aa}. The latter generalises to twisted spectral
 triples the ``fluctuation without first order condition'' introduced
 in  \cite{Chamseddine:2013fk}.

\begin{rem} Attributing the positive (negative) sign to left (right)
  handed entities is a matter of convention. What matters here is that
  $T_F$ is not the grading $\Gamma_F$ because the latter takes the
  same value on right particles \& left antiparticles, and on left
  particles \& right antiparticles.  \end{rem}
 The field content of this model has been developed in~\cite{Filaci:2021ab}: \begin{itemize} \item The twisted fluctuation of the diagonal part of $\gamma_\M\otimes D_F$ yields two Higgs doublets which act independently on the chiral components of Dirac spinors.  They are expected to mix in a single linear combination in the action, yielding the usual Higgs doublet.  \item The off-diagonal part of $\gamma_\M\otimes D_F$ yields a doublet of chiral extra-scalar fields.  \item The twisted fluctuation of $\ds\otimes \I_F$ yields the bosonic part of the Standard Model. In addition, one also gets three new $1$-form fields: two with value in $\R$, one with value in $M_3(\C)$.  \end{itemize}

\noindent Some preliminary results on the action were worked out by Manuele in its PhD, but not published.

\bigskip It is not yet clear whether there exists a twisting operator
that allows to obtain the extra-scalar field while preserving the
first-order condition \eqref{eq:18}, or if the generating of an
extra-scalar field and the twisted first-order condition are mutually
exclusive. In the latter case, one may legitimately question the interest of all that for physics. Since relaxing the first-order condition for usual (i.e. non-twisted) spectral triples is enough to get the desired extra-scalar field (as shown in \cite{Chamseddine:2013uq}), then
\begin{center}
  why twist it ?
\end{center}
\smallskip

Actually the truest interest of twist relies in the new fields of
$1$-forms obtained from the twisted fluctuations of $\ds\otimes\I_F$.
Similar forms already appeared in \cite{buckley}, where only the weak
interaction part of the Standard Model was twisted. This model had some mathematical issue
discussed in \cite{Filaci:2023aa}, later solved in
\cite{Filaci:2021ab}. The meaning of these $1$-forms was unclear at
the time, and has been only recently elucidated in two different
contexts: as torsion in the minimal twist of a manifold, as a change
of signature in the minimal twist of electrodynamics. We briefly
review the torsion result in the next subsection. The change of
signature is discussed at length in section \ref{sec:sigchange}.





\subsection{Torsion as a twisted fluctuation} 
The new fields of $1$-form discussed above have  their origin in the twisted
fluctuation  of
$\ds\,\otimes\,\I_F$.  In order to understand their geometrical meaning, it seems reasonable to focus on the manifold part of the spectral triple.

The twist by grading of the canonical spectral triple \eqref{eq:1} of
a Riemannian, compact, spin manifold $\M$ of even dimension $2m$ is 
\begin{equation} 
\label{eq:184} 
\A= \cinf\otimes\,\C^2, \quad \HH= L^2(\M,S), \quad D=\ds; \quad \rho 
\end{equation} 
with representation
\begin{equation} 
\pi(f, g)=\left(\begin{array}{cc} f\,\I_{2^{m-1}}& 0 \\  0&
                                                           g\I_{2^{m-1}}\end{array}\right)
\label{eq:187} 
\end{equation}
and  twisting automorphism 
\begin{equation}
  \label{eq:42}
  \rho(f,g) = (g,f) \quad\quad \forall (f,g)\in \A. 
\end{equation}
This is the only way to minimally twist a manifold by a finite
dimensional algebra \cite[Prop. 4.2]{Lett.}.

In $KO$-dimension $0$ and $4$ (that is for a manifold of dimension
$2m= 0 \text{ or } 4 \text{ modulo } 8$), there exist non-zero
selfadjoint twisted fluctuations of the Dirac operator~$\ds$, yielding
the twisted covariant operator \cite{Lett.} 
\begin{equation} D_{\omega_f}= \ds -i\, f_\mu\gamma^\mu\gamma_\M\quad
  \text{ with } \quad f_\mu \in C^{\infty}(\M, \R)\label{eq:30}. 
\end{equation} 
The index notation $\omega_f$ is justified noting that the second
term in the equation above is the Clifford action $c$ of the Hodge
dual of the $1$-form 
\begin{equation} 
\label{eq:32} 
\omega_f= f_\mu dx^\mu, 
\end{equation} 
namely one has \cite{Martinetti:2025ab} 
\begin{equation} -i\, f_\mu\gamma^\mu\gamma_\M=\frac{(-i)^{m+1}}{2m}\,
  c(\star \omega_f).
\label{eq:31} 
\end{equation} 
This is a remarkable aspect  of the minimal twist, for in the non
twisted case all the fluctuation of $\ds$ vanish \cite{Connes:1996fu}.  

\begin{rem} 
In $KO$-dimension $2$ and $6$, all selfadjoint twisted fluctuations of
$\ds$ vanish. In odd dimension, there is no grading and the whole
construction would need to be suitably adapted.
\end{rem} 

Furthermore, in dimension $4$, one shows that $D_{\omega_f}$ in
(\ref{eq:30}) is the lift to spinors of an orthogonal and geodesic
preserving connection, with torsion $3$-form
\begin{equation}
{\cal T}=-\star\omega_f.
\label{eq:43}
\end{equation}
In
other dimensions $4 \, [\text{mod }4]$, it is not clear if  there
exists a similar geometrical interpretation of the extra term, other than
\eqref{eq:31}. For instance on a manifold of dimension  $8$, ${\cal T}$ is a
$7$-form to which one may associate tensors of type $(p, q)$ with
$p+q=7$. Has any of  them a particular geometrical meaning ? 

In the minimal twist of the Standard Model, one expects that the new
$1$-form fields have a similar interpretation: two of them as torsion,
the third one as a $M_3(\C)$-valued torsion.

\newpage
\section{Signature change}
\label{sec:sigchange}

\subsection{Twisted inner product}
\label{subsec:twistinprod}

The link between  minimal twists and a transition from the Euclidean
to the Lorentzian has been first stressed in
\cite{Devastato:2018aa}. It was noticed that if  the twisting
automorphism  of a twisted spectral triple $(\A, \HH,
D), \rho$  (with representation $\pi$) extends to an inner automorphism of
the algebra $\B(\HH)$ of bounded operators on $\HH$, that is if there exists a unitary $R\in \B(\HH)$ such that
\begin{equation}
  \label{eq:48bis}
  \pi(\rho(a)))= R\, \pi(a)\, R^* \quad \forall a\in\A,
\end{equation}
then $\rho$ induces an inner product on $\HH$, called \emph{twisted product},
\begin{equation}
  \label{eq:48}
  (\psi, \varphi)_R := <\psi, R\varphi>
\end{equation}
where $\langle\cdot, \cdot\rangle$ is the initial Hilbert product on
$\HH$. The point is that the twisted product is non necessarily
positive definite.

For instance, for the minimal twist of a manifold \eqref{eq:184}, a natural choice
for $R$ is the first Dirac matrix
\begin{equation}
  \label{eq:38}
  \gamma^0=
  \begin{pmatrix}
    0& \I_{2^{m-1}}\\\I_{2^{m-1}}& 0
  \end{pmatrix}.
\end{equation}
 The twisted product is then (with  $dv$ the volume form on $\M$)
\begin{equation}
\label{eq:39bis}
  (\psi, \varphi)_R =\int_\M \langle\bar\psi(x) ,\,\gamma^0 \,\varphi(x) \rangle_{p}\, dv
\end{equation}
where $\langle \cdot, \cdot\rangle_p$ is the canonical scalar product
on $\C^p$ (for $p=2^m$) and $\bar\psi$ is the complex conjugate of
$\psi$. The integrand  is  the one of the inner product between spinors on a
\emph{Lorentzian manifold} $\tilde\M$, namely 
\begin{equation}
  \label{eq:39}
\int_{\tilde\M} \langle\bar\psi(x) ,\,\gamma^0 \,\varphi(x) \rangle_{p}\, d\tilde v.
\end{equation}
Similar considerations for arbitrary
twisted spectral triples satisfying~\eqref{eq:48bis} are found in \cite{Nieuviarts:2024aa}.

Beware that the manifold $\M$ is not turned into a
Lorentzian manifold $\tilde\M$, so the integral  (\ref{eq:39bis}) does not concides
with \eqref{eq:39}. Nevertheless, if one considers a local region of integration, then by Wick rotation the Riemannian
and Lorentzian volume forms should be equal up to a sign. As well 
\eqref{eq:39bis} and \eqref{eq:39} are equal up to a sign, and yield
the same equations of motion.
 If one
integrates on the whole manifold however, one is confronted to the
question of turning a Riemannian manifold into a Lorentzian one,
which is known not to be always possible.  So for the moment, one can
only say that the twist induces \emph{a change of signature  in  the
integrand of the inner product}. Even though limited, this change has
significant consequences in the fermionic action \cite{Martinetti:2019aa,Martinetti:2025ab}.

\begin{rem}
To avoid confusion, let us stress that during the twisting
procedure, the Dirac operator is always a Riemannian one.  We do not
deal with a Lorentzian Dirac operator, contrary to other approches to Lorentzian noncommutative geometry, like
  \cite{Strohmaier:2006xi,Barrett:2007vf,Franco:2012fk,Dungen:2015aa,Nieuviarts:2025aa}.
\end{rem}

In next section,  we generalize the condition \eqref{eq:48bis}  and show that any twisted
automorphism that extends to the whole of ${\cal B}(\HH)$ necessarily
induces a twisted product, but non necessarily implemented by a
unitary.  
 This allows to deal with  twisting
automorphisms $\rho$ that are non necessarily $*$-automorphisms, in agreement
with the unitarity condition  \cite{Connes:1938fk}
\begin{equation}
  \label{eq:63}
  \rho(a^*)= \left(\rho^{-1}(a)\right)^*.
\end{equation}
In \S \ref{subsec:Krein} we work out some rather loose conditions
guaranteeing that a twisted product induces a Krein structure. We
apply these results to minimal twists  (non necessarily by grading)
 in \S \ref{subsec:twistop}  and conclude with some considerations 
on twisted
 unitaries in \S \ref{subsec:unit} . 


\subsection{Expandable twist}
\label{subsec:exp}

Let us begin with a suitable generalisation of the
condition  (\ref{eq:48bis}).
\begin{defi}
 \label{def:expandtwist}
 The twisting automorphism $\rho$ of a twisted spectral triple $(\A,
  \HH, D), \rho$ is \emph{expandable} if there exists an
  automorphism $\tilde\rho$ in $\text{Aut}(\BH)$ such  that
  \begin{equation}
    \label{eq:77}
    \pi(\rho(a)) = \tilde\rho(\pi(a))\quad \forall a\in\A.
  \end{equation}
By extension, we say that the twisted spectral triple itself is \emph{expandable}.
\end{defi}
Given an algebra $\A$ acting as
bounded operators on an Hilbert space $\HH$ through a representation $\pi$,
an automorphism of $\A$
is \emph{spatial}  \cite{Okayasu:1968aa} if there exists an invertible operator $R$ in $\BH$ such that 
\begin{equation}
  \label{eq:72}
  \pi(\rho(a))= R \, \pi(a)\, R^{-1} \qquad \forall a\in \A.
\end{equation}
 \begin{prop}
\label{prop:prodherm}
Any expandable twisting automorphism is spatial, and thus induces a
non-degenerate 
twisted product. The latter is  Hermitian iff
 $\rho$ is spatial for some selfadjoint $R$ in $\BH$.
\end{prop}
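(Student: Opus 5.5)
The first claim rests on the classical fact that every automorphism of $\BH$ is inner. More precisely, every algebra automorphism of $\BH$, not necessarily a $*$-automorphism, is implemented by an invertible bounded operator. So given the extension $\tilde\rho$ from Definition \ref{def:expandtwist}, I would first show that there is an invertible $R\in\BH$ with
\begin{equation*}
\tilde\rho(B)=RBR^{-1}\qquad\forall B\in\BH .
\end{equation*}
To do this, fix a unit vector $\xi$. Let $P$ be the rank-one projection onto $\C\xi$, and set $Q=\tilde\rho(P)$. Since $\tilde\rho$ preserves minimal idempotents, $Q$ is again a rank-one idempotent; choose $\eta\neq 0$ in its range. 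Then define
\begin{equation*}
R(B\xi):=\tilde\rho(B)\eta .
\end{equation*}
This is well defined because $B\xi=0$ implies $BP=0$, hence $\tilde\rho(B)Q=0$, hence $\tilde\rho(B)\eta=0$. From the definition one checks $RB=\tilde\rho(B)R$, and the same construction applied to $\tilde\rho^{-1}$ gives the inverse of $R$. I expect this to be the main obstacle. Boundedness of $R$ needs either automatic continuity of algebra automorphisms of $\BH$ (a Banach algebra with unique norm topology, Johnson's theorem) or a closed graph argument. In the paper I would simply cite the standard result, e.g. Chernoff's or Okayasu's theorem \cite{Okayasu:1968aa}. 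Restricting to $\pi(\A)$ then gives \eqref{eq:72}, so $\rho$ is spatial.

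Next comes the twisted product, defined as in \eqref{eq:48} by $(\psi,\varphi)_R:=\langle\psi,R\varphi\rangle$. Sesquilinearity is immediate. For non-degeneracy: if $(\psi,\varphi)_R=0$ for all $\varphi$, then $\psi\perp\operatorname{Ran}R=\HH$ because $R$ is invertible, so $\psi=0$. The argument is the same in the other slot, using $R^*$, which is also invertible.

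For the Hermiticity statement, note that $(\varphi,\psi)_R=\overline{(\psi,\varphi)_R}$ for all $\psi,\varphi$ iff $\langle\psi,R\varphi\rangle=\langle\psi,R^*\varphi\rangle$ for all $\psi,\varphi$, i.e. iff $R=R^*$. This proves the direction from Hermitian product to selfadjoint implementing operator. For the converse, suppose $\rho$ is spatial with some selfadjoint invertible $R'$. I must then take the twisted product to be the one built from $R'$.

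One point needs care: $R$ is not unique. Since $\tilde\rho$ determines $R$ only up to a nonzero scalar, any other implementing operator is $\lambda R$. Hence the twisted product is defined up to this rescaling, and Hermiticity holds for a suitable choice of scalar precisely when some implementing operator is selfadjoint. I would phrase the final statement accordingly: the product is Hermitian iff $R$ can be chosen selfadjoint.
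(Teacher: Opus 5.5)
Your proof is correct. Its second half, non-degeneracy and the computation showing that $(\cdot,\cdot)_R$ is Hermitian iff $R=R^*$, is the same as the paper's. You also check non-degeneracy in the second slot, which the paper leaves implicit.

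The genuine difference is how you obtain an invertible $R$ with $\tilde\rho=R\cdot R^{-1}$. The paper cites Okayasu's polar decomposition, $\tilde\rho=\tilde\rho_2\circ\tilde\rho_1$ with $\tilde\rho_1=P\cdot P^{-1}$ for a positive invertible $P$ and $\tilde\rho_2$ a $*$-automorphism. It then uses that $*$-automorphisms of $\BH$ are unitarily implemented, which gives $R=UP$. You instead construct $R$ directly from a rank-one idempotent (the Eidelheit--Chernoff argument), using only the algebraic structure of $\BH$. This is more elementary and self-contained, and it loses nothing: the factorization $R=UP$ follows afterwards from the polar decomposition of the invertible operator $R$. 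The paper's route is shorter on the page but rests on a less standard theorem.

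The step you flag as the main obstacle does not need automatic continuity of $\tilde\rho$. Suppose $\psi_n\to0$ and $R\psi_n\to\chi$. For every finite-rank $B$ one has $\tilde\rho(B)\chi=\lim_n\tilde\rho(B)R\psi_n=\lim_n RB\psi_n=0$, because $B\psi_n\to0$ inside the finite-dimensional space $\mathrm{Ran}\,B$, on which $R$ is continuous. Now $\tilde\rho$ maps the finite-rank operators onto a non-zero two-sided ideal of $\BH$, and any such ideal contains every finite-rank operator. Hence $\chi=0$, and the closed graph theorem gives $R\in\BH$.

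One correction to your last paragraph: $R$ is unique up to a scalar only for a \emph{fixed} extension $\tilde\rho$. Spatiality \eqref{eq:72} only asks $R$ to implement $\rho$ on $\pi(\A)$. Such an $R$ is therefore determined only up to right multiplication by an invertible element of the commutant $\pi(\A)'$. Different extensions then give genuinely different twisted products; for the minimally twisted manifold, $\gamma^0$ and $\gamma^1$ both implement the flip. Your converse, which builds the product from the given selfadjoint $R'$, is unaffected. But you should drop the claim that the twisted product is defined only up to rescaling.
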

\begin{proof}
By Okayasu \emph{polar decomposition theorem} for von Neumann algebra automorphisms 
\cite[Th.9]{Okayasu:1968aa}, any
automorphism $\tilde \rho$ of a von
Neumann algebra $M$, acting on an Hilbert space $\HH$, is the composition
\begin{equation}
\tilde\rho =\tilde\rho_2\circ\tilde\rho_1\label{eq:80}
\end{equation}
of an  automorphism 
$\tilde\rho_1(\cdot)= P\cdot P^{-1}$ induced by an invertible,
positive operator $P\in M$ {\footnote{Called inner by Okayasu.}}
with a $*$-automorphism $\tilde\rho_2$. For $\M=\BH$, any
$*$-automorphism is inner (e.g. \cite[II.5.5.14]{blackadar2006}):
$\tilde\rho_2(\cdot)= U\cdot U^\dag$ for some unitary $U\in\BH$.
Hence any automorphism $\tilde\rho$ of
$\BH$ is of the form 
\begin{equation}
  \label{eq:81}
  \tilde\rho(\cdot) = R \cdot R^{-1}
\end{equation}
for some $R=UP$ in $\BH$.

  The induced twisted product  \eqref{eq:48} is non-degenerate, for 
\begin{equation*}
(\psi, \varphi)_R = 0\:\forall \varphi\in\HH \Longleftrightarrow \langle \psi,
R\varphi  \rangle =0 \:\forall \varphi \Longleftrightarrow \langle \psi,
\tilde\varphi  \rangle =0 \:\forall \tilde\varphi \in\text{Im}(R)\Longleftrightarrow \langle \psi,
\tilde\varphi  \rangle =0 \:\forall \tilde\varphi \in\HH
\end{equation*}
where we use that $R$
 is invertible, so has  image $\HH$.
It is hermitian 
if, and only if, 
\begin{align}
  \label{eq:58}
0=\overline{(\psi, \varphi)_\rho} -(\varphi, \psi)_\rho  &=
\overline{\langle \psi, R\varphi\rangle}-\langle \varphi,
  R\psi\rangle,\\
&={\langle  R\varphi,\psi \rangle}-\langle \varphi, R\psi\rangle
=\langle\varphi, (R^\dag -R), \psi\rangle \quad \forall \varphi, \psi\in\HH,
\end{align}
that is $R=R^\dag$. 
\end{proof}
\begin{cor}
  An expandable  twisting automorphism that induces an Hermitian twisted product is
  necessarily unitary in the sense of (\ref{eq:63}).
\end{cor}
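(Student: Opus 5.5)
By Proposition~\ref{prop:prodherm}, an expandable twisting automorphism $\rho$ that induces a Hermitian twisted product is spatial for some selfadjoint invertible $R\in\BH$:
\begin{equation*}
\pi(\rho(a)) = R\,\pi(a)\,R^{-1}\qquad \forall a\in\A .
\end{equation*}
The plan is to check (\ref{eq:63}) directly from this formula, using only two facts: $R=R^\dag$, and $\pi$ is a faithful $*$-representation.

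The first step is to write $\rho^{-1}$ explicitly. Applying the displayed relation to $\rho^{-1}(a)$ gives $\pi(a)=R\,\pi(\rho^{-1}(a))\,R^{-1}$, hence
\begin{equation*}
\pi(\rho^{-1}(a)) = R^{-1}\pi(a) R .
\end{equation*}

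The second step computes both sides of (\ref{eq:63}) in the representation. On the left, since $\pi$ is a $*$-representation,
\begin{equation*}
\pi(\rho(a^*)) = R\,\pi(a)^\dag R^{-1}.
\end{equation*}
On the right,
\begin{equation*}
\pi\bigl((\rho^{-1}(a))^*\bigr) = \bigl(R^{-1}\pi(a)R\bigr)^\dag = R^\dag\,\pi(a)^\dag\,(R^{-1})^\dag .
\end{equation*}
Because $R$ is selfadjoint, $(R^{-1})^\dag=(R^\dag)^{-1}=R^{-1}$, so this equals $R\,\pi(a)^\dag R^{-1}$. The two sides therefore coincide, and faithfulness of $\pi$ yields $\rho(a^*)=(\rho^{-1}(a))^*$.

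There is no real obstacle. The only points needing care are that $R^{-1}$ is bounded, so the adjoint manipulations are legitimate, and that faithfulness of $\pi$ is needed to pass from the identity in $\BH$ back to $\A$; both are part of the standing assumptions. It is also worth remarking that the same computation works with any $R$ satisfying $R^\dag=\lambda R$ for a scalar $\lambda\neq 0$, but this generalisation is not needed here.
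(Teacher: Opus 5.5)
Your proof is correct and matches the paper's argument essentially step for step: you derive $\pi(\rho^{-1}(a))=R^{-1}\pi(a)R$, compute $\pi(\rho(a^*))$ and $\pi\bigl((\rho^{-1}(a))^*\bigr)$, use the selfadjointness of $R$ given by Proposition~\ref{prop:prodherm} to identify them, and conclude by faithfulness of $\pi$. Your side remark that any $R$ with $R^{\dagger}=\lambda R$ would also work is correct (such a $\lambda$ must satisfy $|\lambda|=1$), but it is not needed.
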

\begin{proof}
By the previous proposition, an expandable  twisting automorphism
$\rho$ satisfies \eqref{eq:72}.
Since the representation
$\pi$ is involutive ($\pi(a^*)=\pi(a)^\dag$), one has 
\begin{align}
  \pi(\rho(a^*)) =R\,\pi(a)^\dag\, R^{-1}
\label{eq:unitcondd}
\end{align}
while, using  $\pi(\rho^{-1}(a))=R^{-1}\pi(a) R$
(obtained substituting $a$ with $\rho^{-1}(a)$ in \eqref{eq:72})
\begin{align}
\pi((\rho^{-1}(a))^*)=\pi(\rho^{-1}(a))^\dag= \left( R^{-1} \,\pi(a)
 \, R\right)^\dag= R^\dag\, \pi(a)^\dag\, (R^{-1})^\dag. 
\label{eq:unitcondd2}\end{align}
If the twisted product is Hermitian, then by the previous proposition
$R$ is selfadjoint, thus \eqref{eq:unitcondd} is equal to \eqref{eq:unitcondd2}.
Since $\pi$ is faithful, this implies \eqref{eq:63}. \end{proof}

The converse is not necessarily true. If $\rho$ satisfies
\eqref{eq:63}, then
from  \eqref{eq:unitcondd}  and  \eqref{eq:unitcondd2}  one gets that
$R^{-1}R^\dag$ belongs to the commutant of $\pi(\A)$, but is not
necessarily equal to $\I$.
\newpage

\subsection{Krein structure}
\label{subsec:Krein}

A  Krein space is a vector space $\cal K$ equipped with an
undefinite, non-degenerate, inner product $(\cdot, \cdot)$, that
decomposes as a direct sum 
\begin{equation}
\K =\K_+ \oplus
\K_-\label{eq:75}
\end{equation}
such that  the product is
positive, resp. negative, definite on $\K_+$, resp. $\K_-$,  and the
latter are complete  with respect to the norm induced
by $\pm (\cdot, \cdot)$.  A \emph{fundamental symmetry} is any
operator $F$ on $K$ with
\begin{equation}
F^2=\I \quad \text{ and  } \quad (\cdot , F\cdot)=(F\cdot , \cdot)
\label{eq:108}
\end{equation}
such  that the
product $(\cdot , F \cdot)$ is definite  positive. For instance 
\begin{equation}
F=\I_+ \oplus(-\I_-)
\label{eq:95b}
\end{equation}
where $\I_\pm$ the identity
operators  on $\K_\pm$. The space $\cal K$ equipped with the product
$(\cdot, F\cdot)$ is then a Hilbert space.
 
The seminal example of Krein space is the space of square integrable
sections of the spinor bundle on a Lorentzian manifold, with product
\eqref{eq:39}.
As explained in \S \ref{subsec:twistinprod}, this Krein product coincides
with the twisted product of the minimally twisted manifold. 

\begin{rem} 
\label{rem:stroh}
Following \cite[eq. 21] {Strohmaier:2006xi}, on a pseudo-Riemannian
  manifold of dimension $n$ with signature $(n-k, k)$ where $k$ is the
  number of $-1$, then the Krein structure on the fibers
  of the spinor
  bundle is
  \begin{equation}
\langle
  \cdot\,, \frak J\,\cdot\rangle_p\label{eq:44}
  \end{equation}
where $\langle \cdot,
  \cdot\rangle_p$ is the canonical scalar product on $\C^p$ ($p$ the
dimension of the spin representation) and 
\begin{equation}
  \label{eq:99}
  \frak J = i^{\frac{n(n-1)}{2}}\gamma^1... \gamma^k.
\end{equation}
In particular, on a Lorentzian manifold of signature $(1, -1, -1,-1)$
one gets 
(counting the dimension of $\M$ from $0$ to $3$)
\begin{equation}
  \label{eq:100}
   \frak J = i^{\frac{k(k+1)}{2}}\gamma_L^1 \gamma_L^2 \gamma_L^3= i
   \begin{pmatrix}
     0& \I_2\\ -\I_2 &0
   \end{pmatrix}
\end{equation}
where $\gamma_L^j=i\gamma^j$ are the Lorentzian Dirac matrices
obtained multiplying by $i$ the Euclidean ones \eqref{eq:78}, and we
recall that the Pauli matrices are
\begin{equation}
  \label{eq:101}
  \sigma_1=
  \begin{pmatrix}
    0&1\\ 1&0
  \end{pmatrix},\qquad   \sigma_2=
  \begin{pmatrix}
    0&-i\\ i&0
  \end{pmatrix},\qquad   \sigma_3=
  \begin{pmatrix}
    1&0\\ 0&-1
  \end{pmatrix}.
\end{equation}
The Krein product $\langle
  \cdot, \frak J\cdot\rangle_p$  is equivalent with the twisted
  product \eqref{eq:39bis} of the minimally twisted manifold, for
  $\frak J$ is unitary equivalent to $\gamma^0$ for
  \begin{equation}
U=\frac 1{\sqrt 2}
\begin{pmatrix}
  \I_2& \I_2\\ -i\I_2& i\I_2
\end{pmatrix}.\label{eq:36}
\end{equation}
\end{rem}
\newpage

The main result of this section is a sufficient condition for a
twisted product to be Krein. 
\begin{prop}
\label{prop:Kreinprodcut}
 Let $(\A, \HH, D), \rho$ be an expandable twisted spectral
 triple in the sense of definition \ref{def:expandtwist}. Among the Hermitian twisted products, all those induced
 by an operator $R$ with pure point spectrum, and such that
 $0$ is not a limit point of the spectrum, are Krein products.
\end{prop}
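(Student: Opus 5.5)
By Proposition \ref{prop:prodherm}, an expandable twisting automorphism is spatial, and the twisted product $(\cdot,\cdot)_R$ is non-degenerate. Restricting to Hermitian twisted products means, by the same proposition, that $R$ is selfadjoint and invertible in $\BH$. The plan is to build the Krein decomposition from the spectral decomposition of $R$. Since $R$ is selfadjoint with pure point spectrum, $\HH$ is the orthogonal direct sum of the eigenspaces $\HH_\lambda=\ker(R-\lambda\I)$, $\lambda\in\text{Sp}(R)\subset\R$. None of the $\lambda$ vanishes because $R$ is invertible. I then set
\begin{equation*}
\K_+ := \overline{\bigoplus_{\lambda>0}\HH_\lambda},\qquad \K_- := \overline{\bigoplus_{\lambda<0}\HH_\lambda},
\end{equation*}
so that $\HH=\K_+\oplus\K_-$ orthogonally for $\langle\cdot,\cdot\rangle$. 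Both subspaces are invariant under $R$. It follows that they are also orthogonal for the twisted product: for $\psi\in\K_+$ and $\varphi\in\K_-$ one has $(\psi,\varphi)_R=\langle\psi,R\varphi\rangle=0$ since $R\varphi\in\K_-$.

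Next I check definiteness. For $\psi=\sum_{\lambda>0}\psi_\lambda\in\K_+$ one has $(\psi,\psi)_R=\sum_{\lambda>0}\lambda\|\psi_\lambda\|^2\ge0$, with equality only if $\psi=0$, so the product is positive definite on $\K_+$. Symmetrically, it is negative definite on $\K_-$.

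The main point is completeness of $\K_\pm$ for the norms $\|\psi\|_\pm^2:=\pm(\psi,\psi)_R$. Here the hypothesis that $0$ is not a limit point of the spectrum enters. Together with $0\notin\text{Sp}(R)$, it gives $m:=\inf_{\lambda\in \text{Sp}(R)}|\lambda|>0$. Boundedness gives $|\lambda|\le\|R\|$. Hence
\begin{equation*}
m\,\|\psi\|^2\le \|\psi\|_\pm^2\le \|R\|\,\|\psi\|^2\qquad\forall\psi\in\K_\pm ,
\end{equation*}
so $\|\cdot\|_\pm$ is equivalent to the Hilbert norm on $\K_\pm$. Since $\K_\pm$ are closed subspaces of $\HH$, they are complete for $\|\cdot\|_\pm$.

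This yields the Krein structure \eqref{eq:75}. Equivalently, $F=\I_+\oplus(-\I_-)$, i.e. the sign of $R$, is a fundamental symmetry in the sense of \eqref{eq:108}: it squares to $\I$, and it commutes with $R$, hence is $(\cdot,\cdot)_R$-selfadjoint. Moreover $(\cdot,F\cdot)_R=\langle\cdot,|R|\cdot\rangle$ is positive definite.

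The only delicate step is the norm equivalence. Without the spectral gap at $0$ the positive part would carry an incomplete norm, which is why that hypothesis is needed. Strictly speaking, invertibility of $R$ already gives the gap, so the argument should be phrased to make clear where each hypothesis is used. Pure point spectrum is used only to describe $\K_\pm$ via eigenvectors. A fully rigorous version could instead use the spectral projections $\chi_{(0,\infty)}(R)$ and $\chi_{(-\infty,0)}(R)$ from the functional calculus.

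If one of $\K_\pm$ is trivial the product is definite. I would note this degenerate case as a Hilbert space, which is a trivial Krein space, rather than exclude it.
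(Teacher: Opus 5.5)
Your proof is correct and follows essentially the paper's route. You decompose $\HH$ into the closed spans of eigenvectors of $R$ with positive and negative eigenvalues, obtain definiteness from $\pm(\psi,\psi)_R=\sum|\lambda|\,\|\psi_\lambda\|^2$, and obtain completeness from the norm equivalence $m\|\psi\|^2\le\pm(\psi,\psi)_R\le\|R\|\,\|\psi\|^2$ with $m=\inf|\lambda|>0$. Your explicit check that $\K_+$ and $\K_-$ are $(\cdot,\cdot)_R$-orthogonal, the upper constant $\|R\|$ (the paper writes $\|R^+\|^2$), and the remark that invertibility of $R$ already forces the gap at $0$ are all sound refinements of what the paper leaves implicit.
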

\begin{proof}
The operator $R$ is invertible and selfadjoint by
proposition \ref{prop:prodherm},  and has pure point spectrum by hypothesis, so
its eigenvalues are real, non zero and the corresponding
eigenvectors span $\HH$ (see e.g. \cite[Th. VII.4]{Reed1980}). The latter is thus the direct sum
\begin{equation}
  \label{eq:97}
  \HH= \HH^+ \oplus \HH^-
\end{equation}
of the eigenspaces of $\HH_\pm$ corresponding to positive\slash negative 
eigenvalues. Any $\psi^+\in \HH^+$ is the sum 
\begin{equation}
  \label{eq:102}
  \psi^+=\sum_{i=1}^{n_I}\sum_{k=1}^{n_i} \alpha_{ik}
  \,\varphi_{ik}\qquad \alpha_{ik}\in\R
\end{equation}
where $n_I$ is the number of positive eigenvalues
of $R$, and for any distinct eigenvalues $\lambda_i$ with degeneracy
$n_i$ the set $\left\{\varphi_{ik}\in\HH,\,  k=1, ..., n_i\right\}$
is a orthonormal basis of the corresponding  eigenspace. Note that
$n_I$ and the $n_i$'s may well be infinite.  
 For $\psi^+\neq 0$, one checks that 
\begin{equation}
  \label{eq:106}
 (\psi^+ ,\psi^+)_R= \langle\psi^+, R\psi^+\rangle =\sum_{i,k} |\alpha_{ik}|^2 \lambda_i
\end{equation}
is strictly positive, meaning that the (non-degenerate) twisted product restricted  to
$\HH_+$ is definite positive, and induces a norm
$||\cdot||^+=\sqrt{(\cdot, \cdot)_R}$. From \eqref{eq:106} one has
 \begin{equation}
   \label{eq:107}
   (||\psi^+||^+)^2 =\langle\psi^+, R\psi^+\rangle^2\geq \lambda_m
   \sum_{i,k} |\alpha_{ik}|^2 =\lambda_m ||\psi^+||^2
 \end{equation}
where $\lambda_m\neq 0$ is the smallest positive eigenvalues and the
norm on the r.h.s. is the one on $\HH$. Therefore any sequence
$\left\{\psi^+_n\right\}$ in $\HH^+$ which is Cauchy for $||\cdot
||^+$ is also Cauchy for $||\cdot ||$. Furthermore, denoting
$R^+:\HH^+\to\HH^+$ the
restriction of $R$ to $\HH^+$, one has 
\begin{equation}
  \label{eq:103}
(|| \psi^+||^+)^2=\langle \psi^+\oplus 0, (R^+\oplus\I_-)\psi^+\oplus 0\rangle \leq \vert\vert R^+
  \vert\vert^2 \, ||\psi^+||^2.
\end{equation}
Thus any Cauchy sequence $\left\{\psi^+_n\right\}$ in $\HH^+$ complete for $||\cdot||$ is also complete
for $||\cdot||^+$. This shows that $||\cdot ||^+$ is complete on
$\HH^+$. 

Similarly, the twisted product restricted to
$\HH^-$ is definite negative, and $\HH^-$ is complete for the
norm $||\cdot||^-=\sqrt{-(\cdot, \cdot)_R}$. Thus $(\cdot, \cdot)_R$ is a Krein product.
\end{proof}

\noindent  The initial product on $\HH$ is retrieved as 
\begin{equation}
  \label{eq:110}
  \langle\,\cdot, \cdot\rangle= (\cdot,R^{-1} \cdot)_R
\end{equation}
but $R^{-1}$ is not a fundamental symmetry, unless $R$
is unitary. Then $R=R^*= R^{-1}$ and the conditions \eqref{eq:108} are
satisfied. 

\medskip

Proposition \ref{prop:Kreinprodcut} gives a criteria that guarantees the
existence of a twisted Krein product, it does not says that such a
product necessarily exists for any  expandable twisted spectral triple. In the next section, we show that it
does exist in all the examples of twisted spectral triples relevant in high energy physics.
\newpage

\subsection{Minimal twisting by an operator}
\label{subsec:twistop} 

As explained   in \S \ref{subsec:twist-by-grading}, it is always possible to associate a twisted partner to a graded spectral
triple whose representation is sufficiently faithful. But, as
illustrated in the example of the Standard Model, there may well be
other way to do it.  The following definition fixes the frame.
\begin{defi}
\label{def:almostwist}
Let $(\A, \HH, D)$ be a spectral triple and $T$ a selfadjoint,
 bounded, involutive ($T^2=\I$)
 operator on $\HH$, not a multiple of the identity, that commutes with $\pi(\A)$. Let
 \begin{equation}
   \label{eq:62}
   p_\pm = \frac 12(\I \pm T)
 \end{equation}
be the projection on the two orthogonal eigenspaces $\HH_\pm$ of $T$
{\footnote{$T$ being unitary and selfadjoint,  has a discrete spectrum
    $\left\{-1, 1\right\}$. The elements of the latter are thus
    eigenvalues and $\HH$ is the direct sum of the corresponding
    eigenspaces $\HH_+$, $\HH_-$ (e.g. \cite[Prop. 4.4.5]{Pedersen:1989fk}).}} and  \begin{equation}
      \label{eq:34}
   \pi'((a_1, a_2)):= p_+  \pi(a_1) +
p_-\,\pi(a_2) \qquad \forall a_1, a_2\in \A
    \end{equation}
the representation \eqref{eq:24}  of
 $\A\otimes\C^2$ on $\HH$. If the 
latter is faithful and the twisted commutator
\begin{equation}
[D, \pi'(a_1, a_2)]_\rho \quad \text{ for }  \rho\text{ the 
 flip }  \eqref{eq:13}
\label{eq:113}
\end{equation}
is bounded for any $(a_1, a_2)\in\A\otimes\C^2$,
 then the twisted spectral
 triple
 \begin{equation}
(\A\otimes \C^2, \HH, D),\,\rho
\label{eq:83}
\end{equation}
is called \emph{the minimal
  twist of $(\A, \HH, D)$ by $T$.}
\end{defi}
 
\noindent $T$ is asked to commute with the algebra in order to guarantee
that $p_+\pi$ commutes with $p_-\pi$, so that $\pi'$ is a representation  (otherwise $\pi'((a_1, a_2))\, \pi'((b_1, b_2))$ may not be equal to $\pi'((a_1b_1, a_2b_2))$ because
of non vanishing cross terms).

To associate a twisted product \eqref{eq:48} with  a minimal
twist, the latter must be expandable.
\begin{prop} 
\label{prop:Krein} 
The minimal twist
  of  any spectral triple $(\A, \HH, D)$ by some operator $T$ is
  expandable  in the sense of definition \ref{def:expandtwist} if, and only if, there exists an operator $R$ such that 
  \begin{equation}
    \label{eq:112}
    \left\{R, T\right\}=0, \qquad [R, \pi(a)]=0\quad\forall a\in\A.
  \end{equation}
\end{prop}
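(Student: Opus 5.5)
The plan is to reduce expandability to spatiality via proposition \ref{prop:prodherm}, and then to test the spatial implementation on two special families of elements of $\A\otimes\C^2$. Throughout, $R$ in \eqref{eq:112} is understood to be invertible in $\BH$. This is implicit in the statement, since a spatial implementation requires an invertible operator; making this explicit is the only delicate point I expect.

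\emph{Necessity.} Suppose the minimal twist by $T$ is expandable. By proposition \ref{prop:prodherm}, $\rho$ is spatial: there is an invertible $R\in\BH$ with
\begin{equation*}
\pi'(\rho(a'))=R\,\pi'(a')\,R^{-1}\qquad \forall a'\in\A\otimes\C^2 .
\end{equation*}

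First take $a'=({\bf 1},0)$. By \eqref{eq:34}, $\pi'(({\bf 1},0))=p_+$ and $\pi'(\rho({\bf 1},0))=\pi'((0,{\bf 1}))=p_-$. Hence $p_-=Rp_+R^{-1}$, that is $Rp_+=p_-R$. The same argument with $(0,{\bf 1})$ gives $Rp_-=p_+R$. Subtracting the two relations and using $T=p_+-p_-$ gives $RT=-TR$, i.e. $\{R,T\}=0$.

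Next take diagonal elements $a'=(a,a)$. Then $\pi'((a,a))=(p_++p_-)\pi(a)=\pi(a)$, and $\rho$ fixes $(a,a)$. So $\pi(a)=R\,\pi(a)\,R^{-1}$, i.e. $[R,\pi(a)]=0$ for all $a\in\A$.

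\emph{Sufficiency.} Let $R$ be invertible with $\{R,T\}=0$ and $[R,\pi(\A)]=0$. From $RT=-TR$ we get $Rp_\pm=p_\mp R$. Therefore
\begin{equation*}
R\,\pi'((a_1,a_2))\,R^{-1}=R\big(p_+\pi(a_1)+p_-\pi(a_2)\big)R^{-1}=p_-\pi(a_1)+p_+\pi(a_2)=\pi'((a_2,a_1)),
\end{equation*}
which is $\pi'(\rho((a_1,a_2)))$. The map $\tilde\rho(\cdot):=R\cdot R^{-1}$ is an automorphism of $\BH$ satisfying \eqref{eq:77}, so the minimal twist is expandable.

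All the computations above are routine. The only substantive input is proposition \ref{prop:prodherm}, which supplies the implementing operator $R$ in the necessity direction.
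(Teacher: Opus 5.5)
Your proof is correct and follows the paper's argument. For necessity you use the spatial implementation $R\cdot R^{-1}$ from proposition \ref{prop:prodherm} and evaluate it on the elements encoding $T$ (you take $({\bf 1},0)$ and $(0,{\bf 1})$, where the paper takes $T=\pi'(({\bf 1},-{\bf 1}))$) and on the diagonal elements $(a,a)$; sufficiency is the same direct computation via $Rp_\pm=p_\mp R$, and your explicit requirement that $R$ be invertible in $\BH$ is left implicit in the paper.
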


\begin{proof}
Writing
\eqref{eq:34} for  $a_1=-a_2={\bf 1}$ the unit of~$\A$, one gets
\begin{equation}
  \label{eq:88}
T=  \pi'(({\bf 1}, {\bf -1})).
\end{equation} 
  Assume that $\rho$ is expandable, that is there exists  $\tilde\rho$
  as in \eqref{eq:81} that satisfies \eqref{eq:77} (written for $\pi'$).
\linebreak   In particular 
\begin{equation}
      \label{eq:40}
     RTR^{-1}=  R\, \pi'(({\bf 1}, {\bf -1}))\, R^{-1} =\pi'((-{\bf 1},
{\bf 1})) =-T. \end{equation}
Hence the first  equation \eqref{eq:112}.
The latter implies $Rp_\pm= p_\mp R$, so that by \eqref{eq:77}
\begin{align}
  \label{eq:114}
0&= \pi'((a, a))- R\,\pi'((a, a))\, R^{-1},\\
 \label{eq:114bis}
& = p_+\left(\pi(a)
-R\pi(a)R^{-1}\right) + p_-\left(\pi(a) - R\pi(a)R^{-1}\right)=
  \pi(a) - R\pi(a)R^{-1} \quad \forall a\in\A.
\end{align}
Hence the second equation \eqref{eq:112}.

Conversely, any $R$ satisfying \eqref{eq:112} extends the twist, for 
\begin{align}
  \label{eq:115}
  R\pi'((a_1, a_2))R^{-1} &= R\, p_+\pi(a_1) \,R^{-1} +R\,p_-\pi(a_2)
                            \, R^{-1},\\
\nonumber =
 & p_-\,R\pi(a_1) R^{-1} +p_+\,R\pi(a_2) R^{-1}=  p_-\,\pi(a_1) 
  +p_+\,\pi(a_2) =\pi'((a_2, a_1)).
\end{align}

\vspace{-.7truecm}\end{proof}

A  minimal twist by $T$ is expandable only if the eigenspaces $\HH_\pm$ of
$T$ have
the same dimension. Indeed, in the decomposition  $\HH= \HH_+\oplus\HH_-$, the operators $T$ and $R$
 are block matrices
    \begin{equation}
      \label{eq:37}
 T=
 \begin{pmatrix}
   \I_+ & 0 \\ 0& -\I_-
 \end{pmatrix},\quad
\quad R=
      \begin{pmatrix}
        R_{++} & R_{+-}\\ R_{-+} & R_{--}
      \end{pmatrix}
    \end{equation}
where $\I_\pm$ are the identity operators on $\HH_\pm$ and  $R_{ij}$
is an operator from $\HH_i$ to $\hh_j$ for $i,j\in\left\{ +,
  -\right\}$. The anticommutation of $T$ and $R$ implies   
  $R_{++}= R_{--}=0$. By definition $R$ is invertible on the left and
  right (otherwise \eqref{eq:72}  or its inverse 
$R^{-1}\cdot R$  would not be algebra morphisms). From \eqref{eq:37} follows that  $R_{+-}$ and $R_{-+}$ are invertible both on the
left and right, which is possible  only if they  are square matrices,
meaning 
\begin{equation}
\text{dim} \,\HH_+ = \text{dim} \,\HH_-.
\label{eq:89}
\end{equation}
This remains true in infinite dimension (more precisely countably
infinite since $\HH$ is separable by hypothesis), as for the minimal twist of a
manifold: the rank of the spinor bundle is
finite but $\HH_\pm$ have
infinite dimension as vector spaces. 

Another useful criteria to understand whether a minimal twist may be
expandable follows from the computation of the second equation
\eqref{eq:112} for $R$ as in \eqref{eq:37}, with
$R_{++}=R_{--}=0$. Denoting $\pi_\pm$ the restrictions of $\pi$ to
$\HH_\pm$   (as in \eqref{eq:24}), one gets that $[R,\pi(a)]=0$
if, and only if  $R_{+-}\pi_-(a)=\pi_+(a)R_{+-}$ and
$R_{-+}\pi_+(a)=\pi_-(a)R_{-+}$.  In other terms, the two
representations $\pi_\pm$ of $\A$ induced by the twisting operator  must be
conjugate to each another by the action of $R_{+-}$ and $R_{-+}$:
\begin{equation}
  \pi_+(a)=R_{+-}\, \pi_-(a)\, R_{+-}^{-1}\qquad \pi_-(a)=R_{-+}\pi_+(a)\, R_{-+}^{-1.}
  \end{equation}
A easy way to check if this might be true is by taking the trace of
the equation above. If $[R,
\pi(a]=0$, then 
\begin{equation}
  \label{eq:120}
\text{Tr}\, \left(\pi_+(a)\right) = \text{Tr}\,
\left(\pi_-(a)\right) \quad\forall a\in\A.
\end{equation}
Applied to \eqref{eq:34}, this yields
\begin{equation}
  \label{eq:121}
  \text{Tr}\, \pi'(a') = \  \text{Tr}\, \pi'(\rho(a')) \quad\forall
  a'\in\A\otimes \C^2.
\end{equation}
\begin{exem}
Let $\A=\C$ act on  $\HH= \C^3$ as diagonal
  matrix $\pi(z) =\text{diag}\, (z, z, z)$.
The operator $T=\text{diag}\, (1, -1, -1)$ splits $\HH$ in
$\C\oplus\C^2$so that
 \eqref{eq:89} does not hold. In addition,   $p_+\pi(z)=\text{diag}\,
 (z, 0, 0)$, $p_-\pi(z)=\text{diag}\,
 (0, z, z)$ so $\text{Tr} \:\pi_+(z)=
 \text{Tr} \: p_+\pi(z)=z$ whereas $\text{Tr} \:\pi_-(z)=
 \text{Tr} \: p_-\pi(z)=2z$, hence \eqref{eq:120} does not hold. The same is true for
 \eqref{eq:121}: one has $\pi'((z_1, z_2))=\text{diag}\, (z_1, z_2,
 z_2)$ so that
\begin{equation}
\text{Tr}\;\pi'\!((z_1,  z_2)) = z_1 +2z_2 \quad\text{ while }\quad 
 \text{Tr}\;\pi'\!((z_2, z_1))= z_2 + 2z_1. 
\end{equation}

An example where the eigenspaces of $T$ have same dimension but
$\pi_+$ and $\pi_-$  are not
conjugate to each other  is
$\A=\C\oplus M_2(\C)$ acting block diagonally on  $\HH=\C^{10}$  as
\begin{equation}
  \label{eq:122}
  \pi(z,m)=\text{diag}\, (m,c,c,c,m,m,c).
\end{equation}
For $T=\text{diag}\, (\I_2,1,1,1,-\I_2,-\I_2,-1)$, one as
$\text{dim}\,\HH_+=\text{dim}\,\HH_-=5$ and  the representations 
\begin{equation}
  \label{eq:123}
  \pi_+(c,m)=\text{diag}\, (m, c ,c ,c,0_2,0_2,0), \qquad  
  \pi_-(c,m)=\text{diag}\, (0_2, 0, 0, 0,m, m,c)
\end{equation}
are both faithful. However, 
  $\text{Tr}  \left(\pi_+(c,m)\right)=\text{Tr} \, m + 3c$ and  
  $\text{Tr} \left(\pi_-(c,m)\right)= 2\text{Tr} \,m + c$
are not equal for every $m\in M_2(\C), c\in\C$. As well, one has 
\begin{equation}
  \label{eq:125}
  \pi'((m_1, c_1), (m_2, c_2))= \text{diag}\left(m_1, c_1, c_1, c_1, m_2, m_2, c_2\right),
\end{equation}
whose trace   $\text{Tr} \: m_1  + 2\text{Tr}\:  m_2 + 3c_1 + c_2$ is
not equal to 
\begin{equation}
\text{Tr} \;\pi'\!\left(\rho((m_1, c_1), (m_2, c_2))\right)=\text{Tr} \;\pi'\!\left((m_2, c_2), (m_1, c_1)\right)=\text{Tr}
\: m_2  + 2\text{Tr} \: m_1 + 3c_2 + c_1
\label{eq:20}
\end{equation}
for every $c, m$.
 \end{exem}
\newpage

Proposition \ref{prop:Krein} allows to show that the space $\HH$ of an
expandable  minimal twist, equipped with a twisted product, is never  an Hilbert space.
\begin{prop}
\label{lem:indef}
 All the Hermitian twisted products associated with an expandable minimal twist of a spectral
triple $(\A, \HH, D)$ by some twisting operator $T$ are indefinite and
non degenerate.
\end{prop}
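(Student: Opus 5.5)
Non-degeneracy is already settled by Proposition \ref{prop:prodherm}: an expandable twist is spatial, so every twisted product it induces comes from an invertible $R$, and $(\cdot,\cdot)_R$ is non-degenerate. Moreover, a Hermitian twisted product has $R=R^\dagger$. So the only thing left to prove is indefiniteness: for every selfadjoint invertible $R$ implementing the twist, I will exhibit a vector $\psi$ with $(\psi,\psi)_R>0$ and a vector $\varphi$ with $(\varphi,\varphi)_R<0$.

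By Proposition \ref{prop:Krein}, $R$ anticommutes with $T$. In the decomposition \eqref{eq:37} this means $R_{++}=R_{--}=0$, and selfadjointness gives $R_{-+}=R_{+-}^\dagger$. Hence for every $\psi_\pm\in\HH_\pm$ one has $(\psi_+,\psi_+)_R=\langle\psi_+,R\psi_+\rangle=0$, because $R\psi_+\in\HH_-\perp\HH_+$. The same holds on $\HH_-$. So both eigenspaces of $T$ are neutral subspaces for the twisted product, which already rules out definiteness as soon as $\HH_+\neq\{0\}$. This is the case because $T$ is not a multiple of the identity.

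To make the indefiniteness explicit, I will take nonzero $\psi_+\in\HH_+$, set $\psi_-:=R\psi_+\in\HH_-$ (nonzero because $R$ is invertible), and consider $\psi=\psi_+ + t\psi_-$ for real $t$. Expanding and using the vanishing diagonal terms gives
\begin{equation*}
(\psi,\psi)_R = 2t\,\mathrm{Re}\langle\psi_+,R\psi_-\rangle = 2t\,\|R\psi_+\|^2 .
\end{equation*}
Here I use selfadjointness to rewrite $\langle \psi_+, R\psi_-\rangle=\langle R\psi_+,\psi_-\rangle=\|R\psi_+\|^2$. Since $\|R\psi_+\|^2>0$, the choice $t=1$ gives a strictly positive value and $t=-1$ a strictly negative one. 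Equivalently, $\psi=\psi_+\pm R\psi_+$.

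There is no serious obstacle. The only delicate point is making sure that the cross term is nonzero, and this is exactly where I need the invertibility of $R$ (from Proposition \ref{prop:prodherm}) together with its selfadjointness (which is the Hermitian hypothesis). Note that no spectral assumption on $R$ is needed here, unlike in Proposition \ref{prop:Kreinprodcut}.
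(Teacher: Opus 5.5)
Your proof is correct and follows essentially the paper's argument: anticommutation with $T$ forces $R$ to be block off-diagonal, and flipping the sign of the $\HH_-$ component (your $t=\pm1$, the paper's $\psi$ versus $\tilde\psi$) flips the sign of $(\psi,\psi)_R$. Your explicit choice $\psi_-=R\psi_+$ makes concrete the paper's remark that $(\psi,\psi)_R$ cannot vanish for every $\psi$, since otherwise $\mathcal R$ would be zero.
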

\begin{proof}
 By propositions \ref{prop:prodherm} and \ref{prop:Krein},  the twisted products associated with such a minimal twist are Hermitian
if and only if  the operator $R$ in \eqref{eq:37} is of the form 
\begin{equation}
  \label{eq:91}
  R=
  \begin{pmatrix}
    0 & \cal R \\ {\cal R}^\dag &0
  \end{pmatrix}
\end{equation}
for some invertible operator $\cal R: \HH_-\to \HH_+$.
 To any  $\varphi_\pm\in
\HH_\pm$ let us associate the two elements of $\HH$
\begin{equation}
\psi=
\begin{pmatrix}
  \varphi_+ \\\varphi_-
\end{pmatrix}\: , \qquad  \tilde \psi=
\begin{pmatrix}
  \varphi_+ \\-\varphi_-
\end{pmatrix} .\label{eq:82}
\end{equation}
Denoting $\langle \cdot, \cdot\rangle_\pm$ the restriction of the inner
product $\langle \cdot, \cdot\rangle$ of $\HH$ to $\HH_\pm$, one has
\begin{align}
  \label{eq:92}
  &(\psi, \psi)_R = \langle \psi, R\psi\rangle = \langle\varphi_+, {\cal R}
  \varphi_-\rangle_++\langle\varphi_-, {\cal R}^\dag
  \varphi_+\rangle_-,\\
  &(\tilde \psi, \tilde \psi)_R = \langle \tilde \psi, R\tilde \psi\rangle = -\langle\varphi_+{\cal R}
  \varphi_-\rangle_+-\langle\varphi_- {\cal R}^\dag
  \varphi_+\rangle_- =- (\psi, \psi)_R.
\end{align}
Since  $(\psi, \psi)_R $ is real (being the twisted product Hermitian
by hypothesis) and cannot be zero for all~$\psi$
(otherwise $\cal R$ would be zero), there exist $\varphi_\pm$ for
which $(\psi, \psi)_R$ and $(\tilde\psi, \tilde\psi)_R$ have opposite
sign. Hence the twisted product is indefinite.
\end{proof}

If the operator $R$ can be chosen with pure point spectrum and without
$0$ as an accumulation point, then by proposition
\ref{prop:Kreinprodcut} the space $\HH$ equipped with the twisted
product $(\cdot,  \cdot )_R$ is a Krein space. In infinite dimension,
it is  not clear to the author whether such an $R$ always exists. It
exists however in the examples relevant for the Standard Model. 

\begin{exem}
 As explained at the beginning of this section, for the minimal twist
 of a manifold a natural choice of operator $R$ is the
 $\gamma^0$ matrix. Actually, any Dirac matrix satisfies the conditions of proposition
 \ref{prop:Krein} and can be chosen to expand the twisting
 automorphism. It also satisfies the condition of proposition \ref{prop:Kreinprodcut}
 so the twisted product is a Krein product. Furthermore, these
 matrices (in the chiral representation) are unitary, so they are a fundamental
 symmetry. However they are not all equivalent with respect to the
 fermionic action: as explained in \cite{Martinetti:2025ab}, only $\gamma^0$ allows to interpret the integrand in the
  fermionic action as a  Lorentzian lagrangian \cite{Martinetti:2019aa}. 

In the spectral triple of electrodynamics \cite{Dungen:2011fk}, $\A=\cinf\otimes \C^2$ acts on  $\HH=L^2(\M,S)\otimes \C^4$ as
\begin{equation}
  \label{eq:128}
  \pi((f,g))= \text{diag}\, (f\,\I_4,\, f\,\I_4,  g\,\I_4, g\,\I_4) 
\end{equation}
where the $\I_4$ 's act on $L^2(\M, S)$. Its minimal twist by the
grading  
\begin{equation}
\gamma^5\otimes
\gamma_F =
\begin{pmatrix}
  \I_2 &0 \\ 0&-\I_2 
\end{pmatrix}\otimes
\begin{pmatrix}
  1&0&0&0\\ 0& -1 &0&0\\ 0&0&-1&0 \\ 0&0&0&1
\end{pmatrix}
\end{equation}
is built upon the representation of $\A\otimes \C^2$ \cite[eq. 5.8]{Martinetti:2019aa}
\begin{equation}
  \label{eq:129}
  \pi'( (f, g,f',g'))=\text{diag}\, (f\I_2,\, f'\I_2, \, f'\I_2,\,
  f\I_2,\, g'\I_2,\, g\I_2,\,  g\I_2,\,  g'\I_2,). 
\end{equation}
 The operator   
\begin{equation}
  \label{eq:127}
  R=\gamma^0\otimes \I_4
\end{equation}
induces the flip automorphism, satisfies the condition of
propositions \ref{prop:Krein}, \ref{prop:Kreinprodcut} and
is unitary. Thus its induces a Krein product on $\HH$ and is a
fundamental symmetry. The same is true for $\gamma^a\otimes \I_4$
($a=1,2,3$) as well, but only  the operator \eqref{eq:127} allows to obtain Dirac
equation in Lorentzian signature in the fermionic action \cite[\S 5]{Martinetti:2019aa}.

Regarding the Standard Model, as explained in \S\ref{subsec:minsm}, in
order to generate the extra scalar field $\sigma$  
the minimal twist should not be
done by the grading but by the operator 
\begin{equation}
\gamma_\M\otimes T_F\label{eq:130}
\end{equation}
with
$T_F$ defined in \eqref{eq:27} \cite[Remark 3.1]{Filaci:2021ab}.  The twisted
product and the fermionic action have not been investigated yet. By
proposition \ref{prop:Krein}   the twisted Standard Model
is expandable by any operator
\begin{equation}
R=\gamma^a\otimes\I_{96}
\label{eq:131}
\end{equation}
which, by Prop. \ref{prop:Kreinprodcut}  turns the
Hilbert space into a Krein space. One expects that for $a=0$, the
twisted fermionic actions will give equations of motion in Lorentzian
signature. Non published results of Manuele's PhD go into that direction.\end{exem}

\subsection{Twisted unitaries}
\label{subsec:unit}

Given an expandable  twisted spectral triple $(\A, \HH, D), \rho$ with
twisted product $(\cdot, \cdot )_R$, an operator
$u\in{\cal B}(\HH)$ is said \emph{twisted-unitary} if
\begin{equation}
  \label{eq:132}
  (u\varphi, u\psi)_R = (\varphi, \psi)_R\qquad\forall \varphi,
  \psi\in \HH.
\end{equation}
For the minimal twist of a manifold of dimension $n=2m$ with twisted
product \eqref{eq:38}, a twisted unitary $\tilde u$ whose action on the individual
components $\psi_n$ is trivial (for $n=1, ..., p$ with $p=2^m$ the dimension of the
spin representation)  is a square matrix of dimension $p$
which preserves the product
\begin{equation}
  \label{eq:51}
  \langle \cdot,\,  \gamma^0 \cdot\rangle_p .
\end{equation}
Since $\gamma^0$ is unitarily equivalent to $\text{diag}(\I_{\frac p2},-\I_{\frac p2})$, the group $\tilde U_R$ generated by these elements  is the
unitary group $U(\frac p2, \frac p2)$. In particular, on a four dimensional manifold, one
has $p=2^2=4$ so that 
\begin{equation}
  \label{eq:52}
  \tilde U_R =U(2, 2).
\end{equation}

This group is tightly linked to the symmetry group of twistors \cite{R.-Penrose:1973aa}. A possible
relation between the spectral description of the Standard Model and
twistors has long  been pointed out \cite[\S
9.1]{Connes:2010fk}. The discussion above suggests that twistors may be
relevant for a Lorentzian version of noncommutative geometry,
especially in the light of recent works on Wick rotation for
spinors \cite{Woit:2023aa}, that show some similarities with the
change of signature presented here.

\section{Conclusion}
In the last years, Manuele has worked intensively  on the computation
of the spectral action on Lorentzian manifold, aiming ad adapting some
results of \cite{Wrochna:2020aa} using technics of quantum field
theory on curved spacetime.  He has not be given enough time to finish
this work,  but made us promise to complete it.


 \bibliographystyle{abbrv}
 \bibliography{/Users/pierre/physique/articles/biblio} 
\end{document}